\numberwithin{equation}{section}
\numberwithin{figure}{section}
\numberwithin{table}{section}
\newtheorem{theorem}{Theorem}[section]
\newtheorem{corollary}{Corollary}[section]
\newtheorem{lemma}{Lemma}[section]
\theoremstyle{definition}
\newtheorem{remark}{Remark}[section]
\numberwithin{equation}{section}
\definecolor{darkread}{rgb}{0.7, 0, 0}
\definecolor{darkbrown}{rgb}{0.55, 0.2, 0.15}
\definecolor{darkblue}{rgb}{0.1,0.1,0.6}
\definecolor{darkgreen}{rgb}{0.1,0.5,0.2}
\newcommand{\dd}{\mathrm{d}}
\newcommand{\R}{\mathbb{R}}
\newcommand{\ES}{\mathrm{ES}}
\newcommand{\rem}{\mathrm{rem}}
\title{\Large\bf Assessing the difference between integrated quantiles and integrated cumulative distribution functions\thefootnote\relax\footnotetext{We are indebted to Ruodu Wang for reading several drafts of this paper, and for a host of inspiring comments and suggestions. We are also grateful to three anonymous reviewers and the editor in charge of our manuscript for constructive criticism, queries, and suggestions. Our research has been supported by the NSERC Alliance--MITACS Accelerate grant entitled ``New Order of Risk Management: Theory and Applications in the Era of Systemic Risk'' from the Natural Sciences and Engineering Research Council (NSERC) of Canada, and the national research organization Mathematics of Information Technology and Complex Systems (MITACS) of Canada, as well as by the individual NSERC Discovery Grants of Y.~Wei (RGPIN-2023-04674) and R.~Zitikis (RGPIN-2022-04426).}}
\author[,1]{Yunran Wei \thanks{Corresponding author; e-mail: \href{mailto:	Yunran.Wei@carleton.ca}{	Yunran.Wei@carleton.ca}}}
\author[,2,3]{Ri\v{c}ardas Zitikis \thanks{e-mail: \href{mailto:rzitikis@uwo.ca}{rzitikis@uwo.ca}}}
\affil[1]{\normalsize School of Mathematics and Statistics, Carleton University, \break Ottawa, Ontario~K1S~5B6, Canada}
\affil[2]{\normalsize School of Mathematical and Statistical Sciences, Western University, \break London, Ontario~N6A~5B7, Canada}
\affil[3]{\normalsize Risk and Insurance Studies Centre, York University, \break Toronto, Ontario M3J 1P3, Canada}
\date{}
\begin{document}


\maketitle

\vspace{-8mm}

\noindent
\textbf{Abstract.}
This paper offers a mathematical invention that shows how to convert integrated quantiles, which often appear in risk measures, into integrated cumulative distribution functions, which are technically more tractable from various perspectives. The invention helps to avoid a number of technical assumptions that have been traditionally imposed when working with quantities containing quantiles. In particular it helps to completely avoid the requirement of the existence of a probability density function. The developed results explain and illustrate the invention, whose byproducts include the assessment of model uncertainty and misspecification, and the derivation of statistical inference results.

\medskip
\noindent
{\it Key words and phrases}: quantile, Value-at-Risk, integrated Value-at-Risk, Expected Shortfall

\newpage

\section{Introduction}

A number of problems in mathematical finance and insurance rely on risk measures, a large number of which are quantile based. Many of them are weighted integrals, or other functionals, of the underlying quantile functions, also known as Values-at-Risk (VaR). A few illustrative examples are:
\begin{itemize}
\item
Distortion (spectral) risk measures \citep[e.g.,][Section~8.2.1]{MFE15}.
\item
Expected Shortfall (ES), also known as the Tail Conditional Expectation, in addition to a number of other names.
\item
Range-Value-at-Risk (RVaR) \citep{CDS10}, which is the average of quantiles that bridges the ES and the VaR.
\item
Gini Shortfall (GS) \citep{FWZ2017}.
\item
Inter-ES \citep{BFWW22}, which is a variability measure defined as the difference of the ES's at different levels.
\end{itemize}
Standard references for the mathematical theory of risk measures are \cite{pr2007}, \cite{R13}, \cite{MFE15}, and \cite{FS16}.

Formally, let $X$ be a real-valued random variable, whose cumulative distribution function (cdf) we denote by $F$. Suppose for the sake of illustration that we are interested in developing a large-sample non-parametric statistical inference for the integral
\begin{equation}
\int_p^1 F^{-1}(u) \dd u
\label{int-0}
\end{equation}
for some probability level $p\in (0,1)$, where
\[
F^{-1}(u)=\inf\{ x\in \mathbb{R}: F(x)\geq u \}
\]
is the $u^{\textrm{th}}$ quantile of the cdf $F$, that is, the VaR at the level $u$.

Before we proceed further, we need to introduce additional notation. Namely, let $\mathcal{F}_1^+$ denote the set of all cdf's $F$ for which integral~\eqref{int-0} is finite. This is equivalent to saying that $\mathcal{F}_1^+$ is the set of all cdf's $F$ such that the random variables $X\sim F$ satisfy $\mathbb{E}(X^+)<\infty $, where $X^+ =\max\{X,0\}$. Obviously, $\mathcal{F}_1^+ \supset \mathcal{F}_1$, where  $\mathcal{F}_1$ is the set of all cdf's $F$ for which integral~\eqref{int-0} is finite when $p=0$, that is, consists of all those cdf's $F$ such that the random variables $X\sim F$ have finite first moments $\mathbb{E}(X) $. (Recall that $\mathbb{E}(X) $ is finite if and only if $\mathbb{E}(X^+)<\infty $ and $\mathbb{E}(X^-)<\infty $, where $X^- =\max\{-X,0\}$.) The class of all cdf's is denoted by $\mathcal{F}$.

We shall now introduce yet another cdf, which in Section~\ref{maths} below will be a generic cdf denoted by $G$, but presently, to initiate the reader's intuition and to also connect the topic of the present paper to what is already known in the literature, we choose to work with the empirical cdf $F_n$ defined by
\begin{equation}\label{emp-cdf}
F_n(x)={1\over n} \sum_{i=1}^n \mathds{1}\{X_i\le x\},
\end{equation}
where, for illustrative purposes, we assume that the random variables $X_1,\dots , X_n$ are independent copies of $X$. Hence, non-parametric statistical inference for $F^{-1}(u)$ is based on the empirical $u^{\textrm{th}}$ quantile
\[
F_n^{-1}(u)=\inf\{ x\in \mathbb{R}: F_n(x)\geq u \}.
\]
Establishing the limiting distribution for the appropriately normalized difference $F^{-1}(u)-F_n^{-1}(u)$ is challenging because it is not the average of transformed random variables $X_1,\dots , X_n$, although under some assumptions (e.g., absolute continuity of the cdf $F$ plus other minor assumptions on the probability density function, pdf), the difference $F^{-1}(u)-F_n^{-1}(u)$ is, asymptotically when $n\to \infty $, such an average \citep{B1966}. For more details on the topic, we refer to, e.g.,  \citet[][Sections~2.5 and~2.8.3]{S1980}. Hence, it is tempting to conclude that under the same assumptions, the integral
\begin{equation}\label{diff-int}
\int_p^1 \big( F^{-1}(u)-F_n^{-1}(u) \big) \dd u
\end{equation}
is also, asymptotically when $n\to \infty $, the average of certain transformations of $X_1,\dots , X_n$. This is indeed true but, very interestingly, such an asymptotic representation of integral~\eqref{diff-int} holds under much weaker assumptions than those required for $F^{-1}(u)-F_n^{-1}(u)$.

To see why integration improves the situation, we set $p=0$ and write the equations
\begin{align}
\int_0^1 \big( F^{-1}(u)-F_n^{-1}(u) \big) \dd u
&=\int_{-\infty}^{\infty } \big( F_n(x)-F(x) \big) \dd x
\notag
\\
&= {1\over n}\sum_{i=1}^n \int_{-\infty}^{\infty } \big( \mathds{1}\{X_i\le x\}-F(x) \big) \dd x ,
\label{eq-0}
\end{align}
with an illuminating proof of the first equation given in Lemma~\ref{lemma-1}. Hence, except for the inevitable requirement that $F$ has a finite first moment, that is, $F\in \mathcal{F}_1$, no other assumption is required for equation~\eqref{eq-0} to hold,  and this has inspired our current considerations, developed in full generality in next Section~\ref{maths} and illustrated throughout the rest of this paper with examples spanning areas well beyond statistical inference.

Indeed, the first equation of~\eqref{eq-0} naturally leads us to the topic of the next section where, for two arbitrary cdf's $F$ and $G$, we assess the magnitude of the ``gap''
\begin{equation}\label{gap}
\Gamma_p(F,G):=\int_p^1 \big( F^{-1}(u)-G^{-1}(u) \big) \dd u
-
\int_{F^{-1}(p) }^{\infty } \big( G(x)-F(x) \big) \dd x
\end{equation}
between the two integrals on the right-hand side of equation~\eqref{gap}. Obviously, our earlier illustration concerns with the special case $G=F_n$, but our main results hold for generic cdf's $G$ and are therefore formulated and discussed in this way in Section~\ref{maths}. Apart from the traditional in the  mathematical sciences strive to obtain as general results as possible, the generality of our arguments in the next section is welcome from several perspectives:
\begin{enumerate}
  \item Having the main results only in the case $G=F_n$ would potentially mislead the reader into thinking that the empirical cdf $F_n$ is necessary for our arguments, which is not the case as only the very basic properties of cdf's are actually needed.
  \item The usefulness of our arguments is much wider than the mere case of $F_n$  and includes prominent scenarios such as model uncertainty, misspecified distributions (think of the mixture of the underlying cdf $F$ and some other cdf $H$), and various parametric, non-parametric and other estimators of $F$, depending on sampling designs, which could, and in practice are, rather complex. We shall elaborate on these topics in concluding  Section~\ref{conclusion}, when all the required for such a discussion results have been established. 
\end{enumerate}

Foundational results for $\Gamma_p(F,G)$ in the case of generic pairs $(F,G)$ of cdf's, and also for other related to $\Gamma_p(F,G)$ quantities, are formulated and discussed in Section~\ref{maths}. Section~\ref{stats} contains several corollaries in the special case $G=F_n$ that illustrate how statistical inference for  integrated quantile~\eqref{int-0} and its various functionals can almost effortlessly be derived from the results of Section~\ref{maths}.
As a further illustration of the power of our general results of Section~\ref{maths}, in Section~\ref{crm} we shall discuss coherent distortion (spectral) risk measures, including the ES. Section~\ref{conclusion} concludes the paper with additional notes and afterthoughts. Proofs and other technicalities are in Appendix~\ref{proofs}.

\section{Foundational results}
\label{maths}

For the sake of symmetry and thus added mathematical beauty, in the following theorem we consider the difference
\begin{equation}\label{gamma-2d}
 \Delta_{p,z} (F,G)  = \int_p^1 ( F^{-1}(u) -G^{-1}(u) )\dd u  -  \int_z^\infty (G(x)-F(x)) \dd x
\end{equation}
between the two integrals for arbitrary $p\in (0,1)$ and $z\in \R$. Of course, setting $z=F^{-1}(p)$ brings us back to the original task of assessing the magnitude of $\Gamma_p(F,G)$ because
\begin{equation}\label{gamma-0}
\Gamma_p(F,G)= \Delta_{p,F^{-1}(p)} (F,G).
\end{equation}

\begin{theorem}\label{th:new}
Let $p\in (0,1)$ and $z\in \R$. Then
\begin{align}
 \big (F(z) -p\big )\big ( F^{-1}(p)-z\big )\le  \Delta_{p,z} (F,G)
 \le  \big (G(z)-p\big ) \big (z-G^{-1}(p)\big ).
\label{eq:new}
 \end{align}
\end{theorem}

The ``difference'' functional $\Delta_{p,z}: \mathcal{F}_1^+ \times \mathcal{F}_1^+ \to \mathbb{R}$ is antisymmetric, that is, $\Delta_{p,z} (F,G) =-\Delta_{p,z} (G,F) $ for all $F,G \in \mathcal{F}_1^+$. Furthermore, given $F,G \in \mathcal{F}_1^+$, the real-valued function $(p,z)\mapsto \Delta_{p,z}(F,G)$ is well defined on the strip $(0,1)\times \R$, is always finite, and vanishes when the pair $(p,z)$ is equal to $ (0,-\infty)$ or  $(1,\infty )$.

Note that the product on the left-hand side of bound~\eqref{eq:new} is always non-positive, whereas the product on the right-hand side of bound~\eqref{eq:new} is always non-negative, because  for every cdf $H$, and thus for $F$ and $G$ in particular, the bound $H(z)\ge p$ holds if and only if $z\ge H^{-1}(p)$.

It is also important to note that when $z=F^{-1}(p)$, the difference  $\Delta_{p,z}(F,G)$, which is equal to $\Gamma_p(F,G)$,  is non-negative for every $p\in (0,1)$, and when $z=G^{-1}(p)$, the difference is non-positive for every $p\in (0,1)$. We shall see the value of these observations later in this section.

To illustrate the ``difference'' function $(p,z)\mapsto \Delta_{p,z}(F,G)$, let $F\sim \textrm{Lomax}(\alpha_1,1)$ and  $G\sim \textrm{Lomax}(\alpha_2,1)$ be two Lomax cdf's with the shape parameters $\alpha_1>0$ and $\alpha_2>0$, and the same scale parameter $\lambda =1$. We have
\begin{multline}\label{foot-00}
\Delta_{p,z}(F,G)=
\frac{\alpha_1}{\alpha_1-1}(1-p)^{1-1/\alpha_1}
-\frac{\alpha_2}{\alpha_2-1}(1-p)^{1-1/\alpha_2}
\\
- \frac{1}{\alpha_1-1}(1+z)^{1-\alpha_1}
+\frac{1}{\alpha_2-1}(1+z)^{1-\alpha_2},
\end{multline}
which is depicted in Figure~\ref{fig:foot-00}.
\begin{figure}[h!]
    \centering
    \resizebox{80mm}{75mm}{\includegraphics{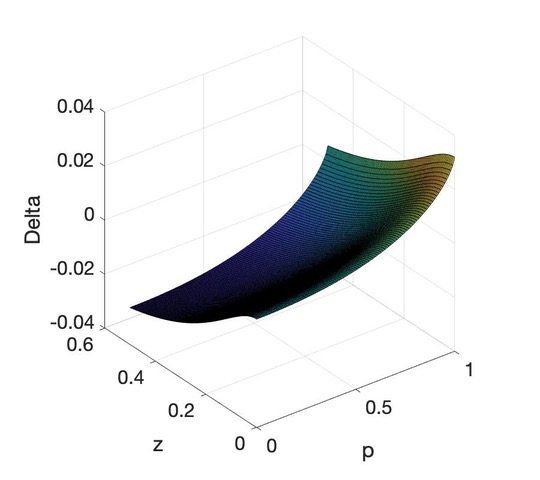}} \quad
    \resizebox{80mm}{75mm}{\includegraphics{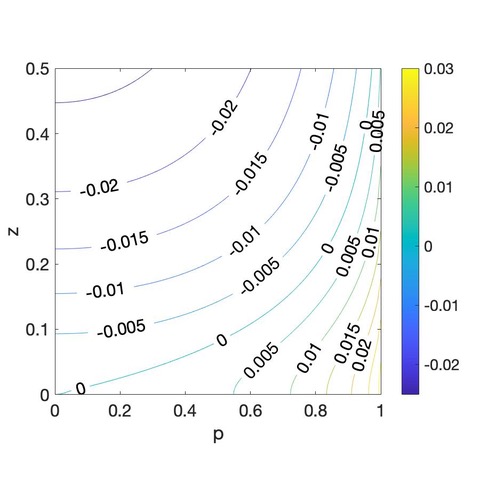}}
    \\
    \resizebox{80mm}{75mm}{\includegraphics{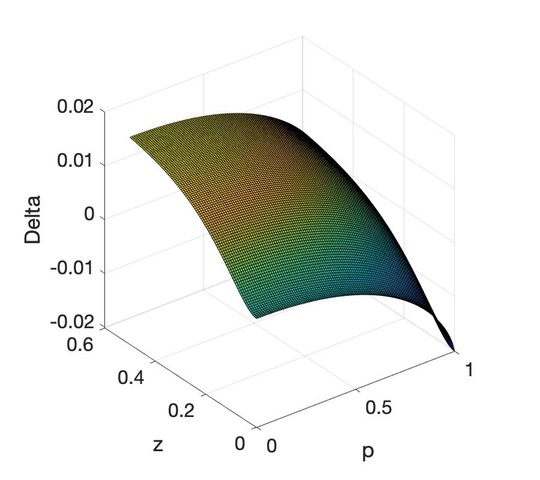}} \quad
    \resizebox{80mm}{75mm}{\includegraphics{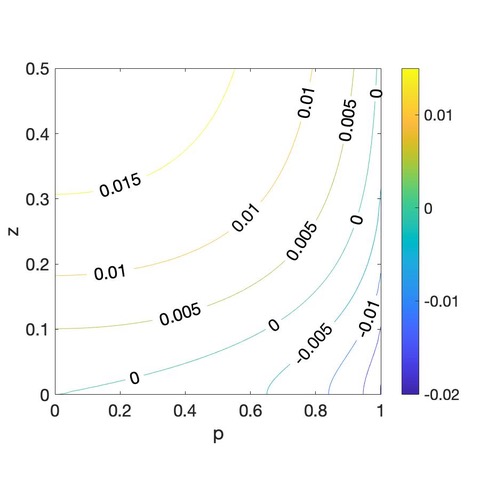}}
    \caption{The ``difference'' function $(p,z)\mapsto \Delta_{p,z}(F,G)$ (left-hand panels) and its contour plots (right-hand panels) when $F\sim\textrm{Lomax}(10,1)$ and $G\sim\textrm{Lomax}(\alpha_2,1)$ with $\alpha_2=8$ (top panels) and $\alpha_2=12$ (bottom panels).}
    \label{fig:foot-00}
\end{figure}
Recall that the Lomax distribution $\textrm{Lomax}(\alpha,\lambda )$, whose cdf is $x \mapsto 1-(1+x/\lambda)^{-\alpha }$, has a finite first moment when the shape parameter $\alpha >1$, and it has a finite variance when $\alpha > 2$. In the two panels of Figure~\ref{fig:foot-00} we have depicted some of these cases.

The following corollary to Theorem~\ref{th:new} plays a fundamental role in the development of statistical inference for integral~\eqref{int-0} in the following section, where we shall set $G=F_n$. Throughout the rest of the current section, however, we keep on working with generic cdf's $F$ and $G$, which belong to either $\mathcal{F}_1^+$ or $\mathcal{F}$, depending on the results considered.

\begin{corollary}\label{cor-00}
For any $p\in (0,1)$, we have
\begin{equation}\label{lembound-00}
0\le \Gamma_p(F,G)\le \big (G(x_p)-p\big ) \big( F^{-1}(p)-G^{-1}(p) \big) ,
\end{equation}
where $x_p:=F^{-1}(p)$. If the cdf $F$ is continuous at the $p^{\textrm{th}}$ quantile  $x_p$, then
\begin{equation}\label{lembound-0}
0\le \Gamma_p(F,G)\le  \big( G(x_p)-F(x_p) \big) \big( F^{-1}(p)-G^{-1}(p) \big).
\end{equation}
\end{corollary}

Hence, when $F,G \in \mathcal{F}_1^+$, the ``gap'' function $p\mapsto \Gamma_p(F,G)$ is well defined on the unit interval $(0,1)$, is always finite, non-negative, and vanishes at $p=0$ and $p=1$. From the definition of $\Gamma_p(F,G)$ we notice the lack of symmetry between the cdf's $F$ and $G$, and this is actually beneficial when developing statistical inference, as we shall see in the following sections.

\begin{remark}
As a little curiosity that immediately follows from Corollary~\ref{cor-00}, we note that if $F^{-1}(p)=G^{-1}(p)$, then $\Gamma_p(F,G)=-\Gamma_p(G,F)$, and since both $\Gamma_p(F,G)$ and $\Gamma_p(G,F)$ are non-negative, they are equal to $0$. Of course, the latter statement also immediately follows from the right-most bound of~\eqref{lembound-0}.
\end{remark}

To illustrate $\Gamma_p(F,G)$, let $F\sim \textrm{Lomax}(\alpha_1,1)$ and  $G\sim \textrm{Lomax}(\alpha_2,1)$. We have
\begin{equation}\label{foot-1}
\Gamma_p(F,G)=
(1-p)^{(\alpha_1-1)/\alpha_1}
+\frac{\alpha_2}{1-\alpha_2}(1-p)^{(\alpha_2-1)/\alpha_2}
- \frac{1}{1-\alpha_2}(1-p)^{(\alpha_2-1)/\alpha_1},
\end{equation}
which, as a function of $p$, is depicted in Figure~\ref{fig:bridge}
\begin{figure}[h!]
    \centering
    \resizebox{80mm}{65mm}{\includegraphics{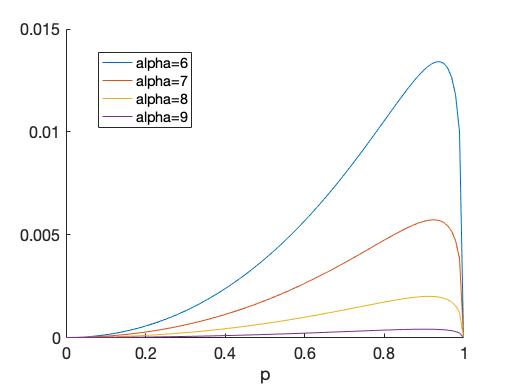}} \quad
    \resizebox{80mm}{65mm}{\includegraphics{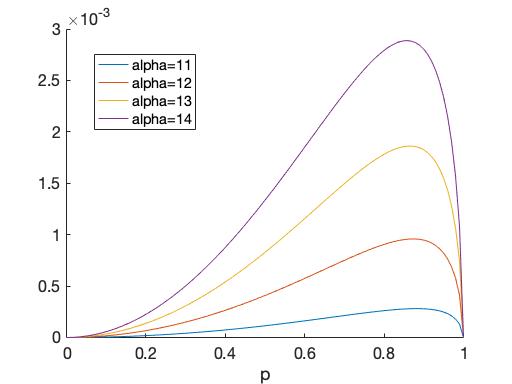}}
    \caption{The ``gap'' function $p\mapsto \Gamma_p(F,G)$ when $F\sim\textrm{Lomax}(10,1)$ and $G\sim\textrm{Lomax}(\alpha_2,1)$ with $\alpha_2= 6, 7, 8$, $9$ (left-hand panel) and $\alpha_2= 11, 12, 13$, $14$ (right-hand panel).}
    \label{fig:bridge}
\end{figure}
for various shape parameter values. Note that $\Gamma_p(F,G)=0$ for all $p\in (0,1)$ when $F=G$, which says that the horizontal axis depicts the function $p\mapsto \Gamma_p(F,G)$ when $G\sim\textrm{Lomax}(\alpha_2,1)$ with the same shape parameter $\alpha_2= 10$ as that of $F$.

Although the definition of $\Gamma_p(F,G)$ requires a finite (upper) first moment, the left-hand side of equation~\eqref{foot-1} is well-defined even when the Lomax shape parameters are below $1$, meaning that the cdf's $F$ and $G$ do not have finite first moments. We shall next explain this phenomenon in an illuminating way via an extension of the functional $\Gamma_p: \mathcal{F}_1^+ \times \mathcal{F}_1^+ \to \mathbb{R}$ to the largest-possible domain $\mathcal{F} \times \mathcal{F}$, where $\mathcal{F}$ is the set of all cdf's, irrespective of whether they have finite moments or not. This makes the contents of the following theorem.

\begin{theorem}\label{theorem-1}
Let $p\in (0,1)$, and let $\Gamma_p^*: \mathcal{F} \times \mathcal{F} \to \mathbb{R}$ be the functional defined by
\begin{equation}\label{extend-1a}
\Gamma_p^*(F,G):= \int_{F^{-1}(p) }^{G^{-1}(p)} \big( p-G(x) \big) \dd x .
\end{equation}
We have the following statements:
\begin{enumerate}[label={\rm\arabic*)}]
\item
If $ F,G\in \mathcal{F}$, then
\begin{equation}\label{extend-1cc}
0\le \Gamma_p^*(F,G)\le  \big( G(x_p)-p \big) \big( F^{-1}(p)-G^{-1}(p) \big),
\end{equation}
and if, additionally, the cdf $F$ is continuous at the $p^{\textrm{th}}$ quantile  $x_p:=F^{-1}(p)$, then
\begin{equation}\label{extend-1c}
0\le \Gamma_p^*(F,G)\le  \big( G(x_p)-F(x_p) \big) \big( F^{-1}(p)-G^{-1}(p) \big).
\end{equation}
\item
If $ F,G\in \mathcal{F}_1^+$, then
\begin{equation}\label{extend-1b}
\Gamma_p^*(F,G)= \Gamma_p(F,G).
\end{equation}
\end{enumerate}
\end{theorem}

Hence, $p\mapsto \Gamma_p^*(F,G)$ is an extended ``gap'' function defined on the unit interval $(0,1)$, always non-negative, taking finite values whenever $p\in (0,1)$, and finite or infinite at the end-points $p=0$ and $p=1$ of its domain of definition.
\begin{figure}[h!]
    \centering
    \resizebox{80mm}{65mm}{\includegraphics{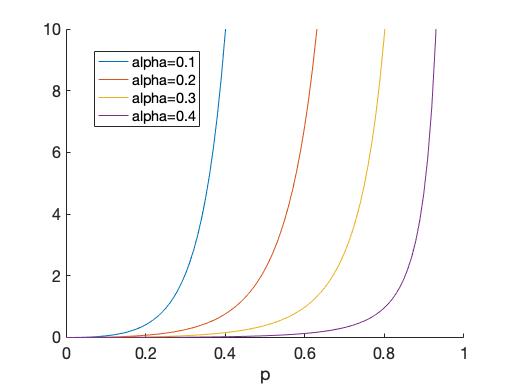}}
    \resizebox{80mm}{65mm}{\includegraphics{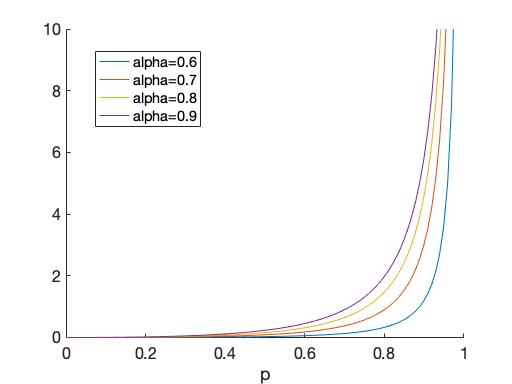}}
    \caption{The extended ``gap'' function $p\mapsto \Gamma_p^*(F,G)$ when $F\sim\textrm{Lomax}(0.5,1)$ and $G\sim\textrm{Lomax}(\alpha_2,1)$ with $\alpha_2= 0.1, 0.2, 0.3$, $0.4$ (left-hand panel) and $\alpha_2= 0.6, 0.7, 0.8$, $0.9$ (right-hand panel).}
    \label{fig:bridge2}
\end{figure}
To illustrate, the left-hand panel of  Figure~\ref{fig:bridge2}
depicts
\begin{equation}\label{foot-2}
\Gamma_p^*(F,G)=
(1-p)^{(\alpha_1-1)/\alpha_1}
+\frac{\alpha_2}{1-\alpha_2}(1-p)^{(\alpha_2-1)/\alpha_2}
- \frac{1}{1-\alpha_2}(1-p)^{(\alpha_2-1)/\alpha_1}
\end{equation}
as a function of $p$ when $F\sim\textrm{Lomax}(0.5,1)$ and $G\sim\textrm{Lomax}(\alpha_2,1)$ with various parameter $\alpha_2$ values strictly below $1$.
Hence, the two cdf's $F$ and $G$ are ultra-heavily tailed, because they do not possess finite first moments. Technically, therefore, $\Gamma_p(F,G)$ does not exist, although $\Gamma_p^*(F,G)$ does exist. Since $\Gamma_p^*(F,G)=0$ for all $p\in (0,1)$ when $F=G$, the horizontal axis depicts $p\mapsto \Gamma_p^*(F,G)$ in the case  $G\sim\textrm{Lomax}(\alpha_2,1)$ with the same shape parameter $\alpha_2= 0.5$ as that of $F$.

In summary, therefore, it is not $\Gamma_p(F,G)$ as the whole that imposes moment-type conditions on the underlying cdf's $F$ and $G$ but the very basic objects that do so, which are the two integrals whose difference makes up the definition of $\Gamma_p(F,G)$ and which require such conditions. This insight, by the way, may potentially lead to the development of ES-type risk measures in situations when the first moments of underlying risks are  infinite, and such situations do exist \citep[e.g.,][]{NECD2006}.

\section{An excursion into statistical inference}
\label{stats}

There are extensive studies devoted to statistical estimation of the ES and other risk measures. Common approaches include parametric methods such as the maximum likelihood and the method of trimmed moments \citep[e.g.,][]{BJZ2009}, semiparametric methods such as those based on Extreme Value Theory \citep[e.g.,][]{EKM97,NRZ10,GGPQ2022}, and non-parametric methods \citep[e.g.,][]{JZ2003,BJPZ2008,C08,PQWY2012}. In terms of condition minimality, the study of \citet{BJPZ2008} is perhaps the closest one to our current study, although this comment applies to only the ES. Indeed, in the case of the Tail Capital Allocation (TCA), which generalizes the ES, a non-parametric methodology has been developed by \citet{GSZ2022a,GSZ2022b}. It should be noted at this point that although the complexities of the latter two studies are unavoidable when dealing with the TCA, they almost vanish in the case of the ES, as we shall soon see.

Hence, let $X_1,\dots , X_n$ be any random variables, and let $F_n$ be their empirical cdf. By setting $G=F_n$ in Corollary~\ref{cor-00}, we readily arrive at the following corollary that plays a pivotal role throughout the rest of this section.

\begin{corollary}\label{corollary-0}
Let $p\in (0,1)$ and $F\in \mathcal{F}_1^+$. Furthermore, let $F_n$ be the empirical cdf based on $X_1,\dots , X_n$. Then
\begin{equation}\label{b-50}
0\le \Gamma_p(F,F_n)\le \big( F_n(x_p)-p \big) \big( F^{-1}(p)-F_n^{-1}(p) \big) .
\end{equation}
If, additionally, the cdf $F$ is continuous at $x_p:=F^{-1}(p)$, then
\begin{equation}\label{b-5}
0\le \Gamma_p(F,F_n)\le \big( F_n(x_p)-F(x_p) \big) \big( F^{-1}(p)-F_n^{-1}(p) \big) .
\end{equation}
\end{corollary}

To appreciate Corollary~\ref{corollary-0}, assume  that $X_1,\dots , X_n$ are independent copies of $X$ whose cdf is $F\in \mathcal{F}_1^+$.
If the quantile function $F^{-1}$ is continuous at the point $p$, meaning that the cdf $F$ is strictly increasing at the point $x_p$, then $F^{-1}(p)-F_n^{-1}(p)=o_{\mathbb{P}}(1)$, and since $F_n(x_p)-p=O_{\mathbb{P}}(1)$, we therefore have
\begin{equation}\label{b-6}
\Gamma_p(F,F_n)=o_{\mathbb{P}}(1)
\end{equation}
when $n\to \infty $. If, on the other hand, $F$ is continuous at the point $x_p$, then $p=F(x_p)$ and so, by the classical law of large numbers, $F_n(x_p)\to_{\mathbb{P}}p=F(x_p)$. Since $F^{-1}(p)\in \mathbb{R}$ and thus $F_n^{-1}(p) =O_{\mathbb{P}}(1)$, we therefore again have statement~\eqref{b-6}. Since any cdf $F$ is either  strictly increasing or continuous, or both, at the point $x_p$, we therefore have the asymptotic representation
\[
\int_p^1 \big( F^{-1}(u)-F_n^{-1}(u) \big) \dd u
= {1\over n}\sum_{i=1}^n Y_{i,p}  +o_{\mathbb{P}}(1)
\]
for every cdf $F\in \mathcal{F}_1^+$, where
\begin{equation}\label{yi}
Y_{i,p}=\int_{F^{-1}(p) }^{\infty } \big( \mathds{1}\{X_i\le x\}-F(x) \big) \dd x .
\end{equation}
The iid random variables $Y_{1,p}, \dots, Y_{n,p}$ have finite first moments because $F\in \mathcal{F}_1^+$. Since their means are zero, by the law of large numbers we have $n^{-1}\sum_{i=1}^n Y_{i,p}  =o_{\mathbb{P}}(1)$ and thus, in turn, we arrive at the following consistency result.

\begin{corollary}\label{cor-2}
Let $p\in (0,1)$ and $F\in \mathcal{F}_1^+$. Furthermore, let $F_n$ be the empirical cdf based on iid~random variables $X_1,\dots , X_n \sim F$.
Then, when $n\to \infty $,
\begin{equation}\label{b-7}
\int_p^1 F_n^{-1}(u)\dd u \to_{\mathbb{P}}\int_p^1 F^{-1}(u) \dd u.
\end{equation}
\end{corollary}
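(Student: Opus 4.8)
The plan is to combine the deterministic sandwich bound of Corollary~\ref{corollary-0} with two applications of the law of large numbers; in effect, all the real work has already been done in passing from Corollary~\ref{cor-00} to Corollary~\ref{corollary-0}, and what remains is bookkeeping. First I would show that the right-hand side of bound~\eqref{b-5} is $o_{\mathbf{P}}(1)$. The factor $F_n(x_p)-F(x_p)$ tends to $0$ in probability by the classical law of large numbers applied to the i.i.d.\ random variables $\mathds{1}\{X_1\le x_p\},\dots,\mathds{1}\{X_n\le x_p\}$, whose common mean is $F(x_p)$. The factor $F^{-1}(p)-F_n^{-1}(p)$ only needs to be $O_{\mathbf{P}}(1)$: since $F^{-1}(p)$ is a finite constant, it suffices that the sample quantile $F_n^{-1}(p)$ be bounded in probability, which follows from the standard fact that $F_n^{-1}(p)\to_{\mathbf{P}}F^{-1}(p)$ at a continuity point of $F$ (or, minimally, from a pointwise law-of-large-numbers argument bounding $F_n$ above and below just to the left and right of $x_p$). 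Multiplying, the upper bound in~\eqref{b-5} is $o_{\mathbf{P}}(1)$, and since the lower bound is $0$ we obtain $\Gamma_p(F,F_n)=o_{\mathbf{P}}(1)$, that is, \eqref{b-6}.

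Next I would unwind the definition~\eqref{gamma-0} with $G=F_n$, which gives
\[
\int_p^1 \big( F^{-1}(u)-F_n^{-1}(u) \big) \dd u
= \int_{F^{-1}(p) }^{\infty } \big( F_n(x)-F(x) \big) \dd x + \Gamma_p(F,F_n).
\]
Because $F_n(x)=n^{-1}\sum_{i=1}^n \mathds{1}\{X_i\le x\}$, the first integral on the right-hand side equals $n^{-1}\sum_{i=1}^n Y_{i,p}$ with $Y_{i,p}$ given by~\eqref{yi}, and so we arrive at the asymptotic representation displayed just before the statement of the corollary.

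Finally I would verify that $Y_{1,p},\dots,Y_{n,p}$ are i.i.d., integrable, and centered. Integrability is exactly the place where $F\in\mathcal{F}_1^+$ is used: splitting the defining integral of $Y_{i,p}$ at $\max\{X_i,F^{-1}(p)\}$ shows that $|Y_{i,p}|$ is dominated by a bounded term plus $\int_{F^{-1}(p)}^{\infty}(1-F(x))\,\dd x$, and the latter integral is finite precisely because $\mathbb{E}(X^+)<\infty$. Centering then follows from $\mathbb{E}\mathds{1}\{X_i\le x\}=F(x)$ together with Tonelli/Fubini (justified by the integrability just established), since the integrand of $\mathbb{E}\,Y_{i,p}$ vanishes identically. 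Hence the law of large numbers yields $n^{-1}\sum_{i=1}^n Y_{i,p}=o_{\mathbf{P}}(1)$, and combining this with $\Gamma_p(F,F_n)=o_{\mathbf{P}}(1)$ gives $\int_p^1\big(F^{-1}(u)-F_n^{-1}(u)\big)\dd u=o_{\mathbf{P}}(1)$, which is~\eqref{b-7}.

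I do not expect a genuine obstacle here: the difficulty has been front-loaded into the deterministic estimate~\eqref{b-5}. The only points that call for a little care are the $O_{\mathbf{P}}(1)$ control of $F_n^{-1}(p)$ and the conceptual observation that it is the remainder $\Gamma_p(F,F_n)$ as a whole — not either of the two constituent integrals separately — that one needs to drive to zero; both are routine once~\eqref{b-5} is available.
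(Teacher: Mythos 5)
Your proof is correct and follows essentially the same route as the paper: the sandwich bound of Corollary~\ref{corollary-0}, the law of large numbers for $F_n(x_p)$ together with $F_n^{-1}(p)=O_{\mathbf{P}}(1)$ to get $\Gamma_p(F,F_n)=o_{\mathbf{P}}(1)$, and then the law of large numbers for the iid, integrable, mean-zero $Y_{i,p}$. One small caveat: your parenthetical claim that $F_n^{-1}(p)\to_{\mathbf{P}}F^{-1}(p)$ at a mere continuity point of $F$ is false in general (strict increase at $x_p$ is needed, as the paper itself explains before Corollary~\ref{cor-3}), but this does no harm because only $O_{\mathbf{P}}(1)$ is required, and your fallback argument --- choosing points $a$ and $b$ with $F(a)<p<F(b)$ (note that $b$ may have to lie well to the right of $x_p$ if $F$ is flat at level $p$) and applying the law of large numbers to $F_n(a)$ and $F_n(b)$ --- delivers exactly that.
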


Of course, this corollary can be established in a myriad of other ways and under the same conditions \citep[e.g.,][]{BFWW22}, but the way we have used to prove it here provides an important illustration of how Corollary~\ref{corollary-0} actually works.

\begin{remark}
We have seen that statement~\eqref{b-5} implies  statement~\eqref{b-7} for every $p\in (0,1) $ such that $F^{-1}(p) - F_n^{-1}(p) = o_{\mathbb{P}}(1)$.  The latter statement holds for almost every $p\in (0,1)$, which is a well-known property of empirical quantiles \citep[e.g.,][p.~10]{SW1986}. Therefore, statement~\eqref{b-7} holds for almost every $p\in (0,1)$. Since $p\mapsto \int_p^1 F^{-1}(u) \dd u $ and $p\mapsto \int_p^1 F^{-1}_n(u) \dd u$ are both concave and continuous functions, and since  statement~\eqref{b-7} holds on a dense subset of $(0,1)$, we conclude with the help of   \citet[][Theorem~10.8]{R70} that statement~\eqref{b-7} holds also on the entire $(0,1)$.
\end{remark}

The following CLT-type result is much more useful from the statistical inference point of view than the previous LLN-type result. When reading the following corollary, note the absence of any condition that would involve a pdf of $F$, which is not needed, and is not therefore required to even exist. This is useful and sometimes even crucial because in a number of real-life situations, good cdf models incorporate discrete components, due to the presence of, e.g., many identical values such as claim amounts transformed by  insurance deductibles, policy limits, etc. \citep[e.g.,][and references therein]{br23}.

\begin{corollary}\label{cor-3}
Let $p\in (0,1)$ and $F\in \mathcal{F}_1^+$. Furthermore, let $F_n$ be the empirical cdf based on iid~random variables $X_1,\dots , X_n \sim F$.
If the cdf $F$ is continuous and strictly increasing at $x_p:=F^{-1}(p)$, and if the variance $\sigma_{F,p}^2$ of the random variable
$\int_{F^{-1}(p) }^{\infty } \big( \mathds{1}\{X\le x\}-F(x) \big) \dd x$
is finite, then, when $n\to \infty $, we have the asymptotic normality result
\begin{equation}\label{b-8}
\sqrt{n}\left( \int_p^1 F_n^{-1}(u)\dd u -\int_p^1 F^{-1}(u) \dd u \right)
\to_{d} \mathcal{N}\big( 0,\sigma_{F,p}^2\big ).
\end{equation}
\end{corollary}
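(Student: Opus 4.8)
The plan is to convert the consistency argument preceding Corollary~\ref{cor-2} into a $\sqrt n$-rate result by controlling the remainder $\Gamma_p(F,F_n)$ through the sandwich bound~\eqref{b-5}. First I would record the exact decomposition that follows from the definition~\eqref{gamma-0} of the gap functional with $G=F_n$---all quantities being finite since $F\in\mathcal F_1^+$ and $F_n$ has finite support---together with $F_n(x)=\tfrac1n\sum_{i=1}^n\mathds{1}\{X_i\le x\}$:
\[
\int_p^1\big( F_n^{-1}(u)-F^{-1}(u) \big)\dd u
= -\,\frac{1}{n}\sum_{i=1}^n Y_{i,p}\;-\;\Gamma_p(F,F_n),
\]
where $Y_{i,p}$ is as in~\eqref{yi}. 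The summands $Y_{1,p},\dots,Y_{n,p}$ are iid, have mean zero (this is where $F\in\mathcal F_1^+$ enters, as noted below~\eqref{yi}), and have finite variance $\sigma_{F,p}^2$ by hypothesis, so the Lindeberg--L\'evy central limit theorem gives $-\tfrac{1}{\sqrt{n}}\sum_{i=1}^n Y_{i,p}\to_d\mathcal N(0,\sigma_{F,p}^2)$. Hence it suffices to prove $\sqrt n\,\Gamma_p(F,F_n)=o_{\mathbf P}(1)$; Slutsky's theorem then yields~\eqref{b-8}.

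To do that, I would apply the bound~\eqref{b-5}---valid because $F$ is continuous at $x_p$---and estimate its two factors separately. The factor $F_n(x_p)-F(x_p)$ is $O_{\mathbf P}(n^{-1/2})$, since $n\,\Var\big(F_n(x_p)\big)=F(x_p)\big(1-F(x_p)\big)\le\tfrac14$ and Chebyshev's inequality applies (the pointwise central limit theorem for $F_n$ is not even needed). The factor $F^{-1}(p)-F_n^{-1}(p)$ is $o_{\mathbf P}(1)$ by the classical weak consistency of empirical quantiles \citep{SW1986}, which holds here precisely because $F$ is continuous and strictly increasing at $x_p$, so that $x_p$ is the unique $x$ with $F(x-)\le p\le F(x)$. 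Multiplying, $\sqrt n\,\Gamma_p(F,F_n)\le\sqrt n\cdot o_{\mathbf P}(1)\cdot O_{\mathbf P}(n^{-1/2})=o_{\mathbf P}(1)$, as wanted.

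I do not expect a genuine analytic obstacle; the substantive work is already carried out in Theorem~\ref{th:new} and Corollary~\ref{corollary-0}, and the rest is bookkeeping. The one thing worth stating clearly is the division of labor between the two regularity hypotheses: continuity of $F$ at $x_p$ is what makes the bound~\eqref{b-5} available (and pins its lower end at $0$), while strict monotonicity is exactly what promotes $F_n^{-1}(p)$ from being merely $O_{\mathbf P}(1)$ to being consistent for $x_p$; it is this $o_{\mathbf P}(1)$ factor, combined with the $O_{\mathbf P}(n^{-1/2})$ fluctuation of $F_n(x_p)$, that renders $\Gamma_p(F,F_n)$ negligible at the $\sqrt n$ scale rather than only $o_{\mathbf P}(1)$---which is what separates this central limit theorem from the consistency statement of Corollary~\ref{cor-2}.
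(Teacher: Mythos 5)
Your proof is correct and follows essentially the same route as the paper: decompose via $\Gamma_p(F,F_n)$ with $G=F_n$, apply the CLT to the iid mean-zero summands $Y_{i,p}$, and use bound~\eqref{b-5} together with $F_n(x_p)-F(x_p)=O_{\mathbf P}(n^{-1/2})$ and $F_n^{-1}(p)\to_{\mathbf P}F^{-1}(p)$ (the latter from continuity and strict increase at $x_p$) to get $\sqrt n\,\Gamma_p(F,F_n)=o_{\mathbf P}(1)$. Your only deviation is cosmetic: you obtain the $O_{\mathbf P}(n^{-1/2})$ factor by Chebyshev's inequality where the paper invokes the Bernoulli CLT, which is an equally valid (indeed slightly more elementary) justification.
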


Hence, we now require $F$ to be continuous and strictly increasing at $x_p$. To see why we need the latter (strict monotonicity) condition, note that the empirical quantile $F_n^{-1}(p)$ is equal in distribution to $F^{-1}(E_n^{-1}(p))$, where $E_n^{-1}(p)$ is the empirical quantile based on independent and uniformly on the interval $[0,1]$ distributed random variables $U_1,\dots , U_n$, which may be defined on a different probability space if the original one is not rich enough to support such uniform random variables.  (When $F$ is continuous, such uniform random variables always exist in the original space.) It is well known \citep[e.g.,][p.~10]{SW1986} that $E_n^{-1}(p)$ converges in probability to $p$, and so for $F^{-1}(E_n^{-1}(p))$ to converge in probability to $F^{-1}(p)$, we need continuity of $F^{-1}$ at the point $p$, which is equivalent to the assumption that $F$ is strictly increasing at $x_p$. This, by the way, helps us to understand, and appreciate, why \citet{BFWW22}, who assume the existence of a pdf of $F$, require the pdf to be strictly positive in their CLT-type results, as this requirement implies that the cdf $F$ is strictly increasing.

To discuss the variance $\sigma_{F,p}^2$, we need additional notation. Namely, let $\mathcal{F}_2^+$ denote the set of all cdf's $F$ such that any random variable $X\sim F$ satisfies $\mathbb{E}((X^+)^2)<\infty $. Obviously, $\mathcal{F}_2^+ $ consists of all cdf's $F$ for which $\int_p^1 \big(F^{-1}(u)\big)^2 \dd u <\infty $  for every $p\in (0,1)$. As we shall show in Lemma~\ref{lemma-3} in Appendix~\ref{proofs}, the variance $\sigma_{F,p}^2$ is finite whenever $F\in \mathcal{F}_2^+$. Furthermore, we shall also show in the same lemma that when $F\in \mathcal{F}_2^+$, the variance $\sigma_{F,p}^2$ can be expressed as
\begin{equation}
\sigma_{F,p}^2=\int_{F^{-1}(p) }^{\infty } \int_{F^{-1}(p) }^{\infty }
\big( F(x \wedge y)-F(x)F(y) \big) \dd x \dd y,
\label{int-01}
\end{equation}
where $x \wedge y$ denotes the minimum of $x$ and $y$.

To see how Corollary~\ref{cor-3} almost effortlessly follows from Corollary~\ref{corollary-0}, we first rewrite the definition of $\Gamma_p(F,F_n)$ as follows:
\begin{equation}\label{b-20}
\sqrt{n}\int_p^1 \big( F^{-1}(u)-F_n^{-1}(u) \big) \dd u
= {1\over \sqrt{n}}\sum_{i=1}^n Y_{i,p}  +\sqrt{n} \Gamma_p(F,F_n) ,
\end{equation}
where $Y_{1,p},\dots , Y_{n,p}$ are the random variables defined by equation~\eqref{yi}. Obviously, under the conditions of Corollary~\ref{cor-3}, we have
\[
{1\over \sqrt{n}}\sum_{i=1}^n Y_{i,p}  \to_{d} \mathcal{N}\big( 0,\sigma_{F,p}^2\big ),
\]
and so Corollary~\ref{cor-3} follows provided that
\begin{equation}\label{b-9}
\sqrt{n} \Gamma_p(F,F_n)=o_{\mathbb{P}}(1).
\end{equation}
Bound~\eqref{b-5} plays a pivotal role in establishing statement~\eqref{b-9}, as we shall now demonstrate: First, the classical CLT for iid~Bernoulli random variables implies
$\sqrt{n}  \big( F(x_p)-F_n(x_p) \big)=O_{\mathbb{P}}(1)$,
whereas the assumption that the cdf $F$ is strictly increasing at the point $x_p=F^{-1}(p)$ implies
$F_n^{-1}(p)\to_{\mathbb{P}}F^{-1}(p) $ (it is helpful to now recall the discussion in the paragraph that immediately follows Corollary~\ref{cor-3}).
Hence, statement~\eqref{b-9} holds, and so does Corollary~\ref{cor-3}. In summary, we almost effortlessly established the asymptotic normality of the appropriately normalized integral $\int_p^1 F_n^{-1}(u)\dd u $ under minimal conditions on the cdf $F$.

\section{Coherent distortion risk measures and beyond}
\label{crm}

We can equally successfully and almost effortlessly tackle more complicated integrals,  such as
\[
\rho(F):=\int_0^1 \ES_p(F)\mu(\dd p) ,
\]
where $\mu $ is a measure determined by the context of a specific application, or a theory, and
\[
\ES_p(F)={1\over 1-p}\int_p^1 F^{-1}(u) \dd u
\]
is the Expected Shortfall (ES), whose pivotal role in finance and insurance has been amply discussed, with the first-of-its-kind axiomatic foundation provided by \citet{WZ2021}.

All coherent distortion risk measures can be expressed as $\rho(F)$ \citep[][Proportion~8.18]{MFE15}, and the class of these risk measures coincides with the class of all comonotonic-additive coherent risk measures \citep{K01}. Note also that the point measure $\mu(\{p\})=1$ gives $\rho(F)=\ES_p(F)$, which up to the constant $1/(1-p)$ is equal to integral~\eqref{int-0}.  In fact, $\mu$ can be any signed measure as long as $\rho(F)$ is finite \citep[e.g.,][]{WWW2020}, because in what follows we shall only need the linearity property of the integral with respect to the integrand (e.g., with respect to the quantile function) and not its positivity.

Establishing consistency and asymptotic normality of $\rho(F)$ reduces to establishing the corresponding properties of linear combinations of integrals of the types that appear in the above considerations. Indeed, with the empirical ES defined by
\[
\ES_{p,n}(F)={1\over 1-p}\int_p^1 F_n^{-1}(u) \dd u,
\]
we have
\begin{align}
\int_0^1 \Big( \ES_p(F)-\ES_{p,n}(F) \Big) \mu(\dd p)
&= \int_0^1 \bigg( {1\over n}\sum_{i=1}^n Y_{i,p}  +\Gamma_p(F,F_n) \bigg) {1\over 1-p}\mu(\dd p)
\notag
\\
&= {1\over n}\sum_{i=1}^n \int_0^1     {Y_{i,p}\over 1-p}\mu(\dd p)
+\int_0^1  {\Gamma_p(F,F_n)\over 1-p}\mu(\dd p),
\label{es-rep}
\end{align}
where the random variables $Y_{1,p}, \dots , Y_{n,p}$ are defined by equation~\eqref{yi}. Clearly, under appropriate conditions on the cdf $F$ and measure $\mu $, the random variables
\[
Z_{i,p}:=\int_0^1 {Y_{i,p}\over 1-p}\mu(\dd p), \quad 1\le i \le n,
\]
are iid, centered at $0$, and have finite second moments, thus satisfying the CLT. To verify that the right-most integral in equation~\eqref{es-rep} converges in probability to $0$, we first bound it:
\begin{equation}
\int_0^1  {\Gamma_p(F,F_n)\over 1-p}|\mu |(\dd p)
\le \int_0^1  {\big( F^{-1}(p)-F_n^{-1}(p) \big) \big( F_n(x_p)-F(x_p) \big)\over 1-p}|\mu |(\dd p),
\label{es-bound}
\end{equation}
where $|\mu |$ is the variation of the (possibly signed) measure $\mu $. (Note that the variation and the measure itself are different only in the case of signed measures.) In summary, under the simple random sampling design, from equation~\eqref{es-rep} we immediately deduce the following CLT result
\begin{equation}
\sqrt{n}\int_0^1 \Big( \ES_p(F)-\ES_{p,n}(F) \Big) \mu(\dd p)
\to_{d} \mathcal{N}\big( 0,\sigma_{F,\mu }^2\big ),
\label{es-clt}
\end{equation}
where the asymptotic variance $\sigma_{F,\mu }^2$ is given by the formula
\[
\sigma_{F,\mu }^2=\int_0^1\int_0^1 {\mathbb{E}\big(Y_{1,p}Y_{1,q}\big)\over (1-p)(1-q)}\mu(\dd p)\mu(\dd q)
\]
with
\[
\mathbb{E}\big(Y_{1,p}Y_{1,q}\big)
= \int_{F^{-1}(p) }^{\infty }\int_{F^{-1}(q) }^{\infty }
\big( F(x \wedge y)-F(x)F(y) \big)  \dd x \dd y.
\]

The use the second half of Corollary~\ref{corollary-0} to establish bound~\eqref{es-bound} may at first glance give the impression that $F$ has to be continuous at $x_p$ for every $p\in (0,1)$, but this is true only if we ignore the role of $\mu $. To illustrate how important it is to take the measure $\mu $ into account, we start with the simplest example $\mu(\{p^*\})=1$ with any fixed $p^*\in (0,1)$, in which case the validity of bound~\eqref{es-bound} follows if we assume that $F$ is continuous at $x_{p^*}$ for the given $p^*$. In the case of discrete mixtures of ES's, we would have $\mu(\cup_{i}\{p_i\})=\sum_{i} \mu(\{p_i\})=1$ and thus bound~\eqref{es-bound} would hold whenever the cdf $F$ is continuous at $x_p$ for every $p\in \cup_{i}\{p_i\}$. For continuous measures, the matter is simpler because the cdf $F$ and the quantile function $F^{-1}$ can be discontinuous only on at most countable number of points.

Hence, coming back to bound~\eqref{es-bound} and assuming its validity (i.e., assuming  appropriate conditions on $F$ and $\mu $), we can employ weighted LLN- and  CLT-type results for general quantile and empirical processes \citep[e.g.,][]{SW1986} in order to show that the right-hand side of bound~\eqref{es-bound} converges in probability to $0$ when $n\to \infty $. These are standard technicalities, whose choices are contingent on the available information (or lack of it) about the measure $\mu $ and the cdf $F$, and, also very importantly, on how the two interact.

To illustrate, consider a simple (in the context of the present paper) but very important case of the Expected Shortfall at any given probability level $p^* \in (0,1)$, which we briefly mentioned above but will now tackle with rigour and in full detail. Hence, with $\mu $ being the probability measure induced by the degenerate at the point $p^* \in (0,1)$ random variable, we have the equation
\[
\sqrt{n}\int_0^1 \Big( \ES_p(F)-\ES_{p,n}(F) \Big) \mu(\dd p)
=\sqrt{n}\Big( \ES_{p^*}(F)-\ES_{p^*,n}(F) \Big)
\]
and, in view of statement~\eqref{es-clt} and the surrounding it discussion, we have the asymptotic normality result
\[
\sqrt{n}\Big( \ES_{p^*}(F)-\ES_{p^*,n}(F) \Big)
\to_{d} \mathcal{N}\big( 0,\sigma_{F}^2\big )
\]
with the asymptotic variance
\begin{equation}
\sigma_{F}^2={1\over (1-p^*)^2}\int_{F^{-1}(p^*) }^{\infty }\int_{F^{-1}(p^*) }^{\infty }
\big( F(x \wedge y)-F(x)F(y) \big)  \dd x \dd y,
\label{es-var}
\end{equation}
provided that the following two conditions hold: first, $\mathbb{E}((X^+)^2)<\infty $, and second, the cdf $F$ is continuous and strictly increasing at the quantile $F^{-1}(p^*)$.  In the current context, these are truly minimal conditions.

It now becomes instructive to recall the work of \citet[][Section~5]{BFWW22} whose expression
\begin{equation}
\sigma_{F}^2={1\over (1-p^*)^2}\int_{p^*}^{1}\int_{p^*}^{1}
{ s \wedge t-st \over f(F^{-1}(s))f(F^{-1}(t)) } \dd s \dd t
\label{es-var-pdf}
\end{equation}
of the asymptotic variance $\sigma_{F}^2$ is of course equivalent to that given by equation~\eqref{es-var}, provided that the cdf $F$ has a density $f$, which we do not require due to our technique of proof. The reason \citet{BFWW22} need absolute continuity of the cdf $F$ is that their proof, which is quite different from ours, relies on reducing the asymptotic behaviour of
$\sqrt{n}( F^{-1}(u)-F_n^{-1}(u)) $ to that of a weighted Brownian bridge \citep{B1966}, thus inevitably requiring the existence of $f$.

Finally, note the following alternative way of writing equation~\eqref{es-var-pdf}:
\[
\sigma_{F}^2={1\over (1-p^*)^2}\int_{p^*}^{1}\int_{p^*}^{1}
( s \wedge t-st )\dd F^{-1}(s) \dd F^{-1}(t).
\]
It does not rely on the existence of $f$, and this alternative expression of the asymptotic variance $\sigma_{F}^2$ in the form of a Lebesgue-Stieltjes integral serves a strong indication that absolute continuity of the cdf $F$ is not needed, and we have indeed established this fact in the present paper.

\section{Concluding notes}
\label{conclusion}

The main goal of this paper has been to show that under very mild assumptions,  integrated quantiles can be converted into integrated cdf's with an error term for which theoretically and practically useful bounds have been derived and illustrated. Apart from being an interesting mathematical result, one of the biggest benefits of such a conversion is statistical, which could be at the population level (e.g., assessing model uncertainty or misspecification) or at the data level (e.g., assessing the performance of various estimators).

Consider first a problem at the population level, inspired by \citet{CDS10}. Specifically, the results that we have derived in the previous sections can be used to assess model uncertainty of the tail behaviour of risks by considering, e.g., a set $\mathcal{H}$ of misspecified cdf's such that each $F\in \mathcal{H}$ is only a small perturbation away from the true cdf. Let $F_0$ denote the (unknown) true cdf of the population whose ES at a level $p\in (0,1)$ we wish to assess. The expert's subject-matter knowledge may suggest some cdf $F\in \mathcal{H}$ as a proxy for $F_0$. Given this information, the resulting $\ES_p(F)$ is known, but what can we say about the ``ideal'' $\ES_p(F_0)$, assuming that the expert believes -- with confidence -- that $F$ is  within a certain distance from $F_0$?

It should be noted at this point that the closeness of $F$ and $F_0$ on their domains of definition $\mathbb{R}$ does not automatically imply the closeness of the corresponding values-of-risk, that is, of the quantiles on their domains of definition $(0,1)$. Hence the challenge, and bound~\eqref{lembound-00} with $F_0$ instead of $G$ gives a helping hand in sorting out the problem:
\begin{equation}\label{gap-i}
\ES_p(F)-\ES_p(F_0)
= {1\over 1-p}\int_{F^{-1}(p) }^{\infty } \big( F_0(x)-F(x) \big) \dd x
+\rem_p(F,F_0),
\end{equation}
where the (non-negative) remainder term $\rem_p(F,F_0)$ satisfies the bound
\begin{equation}\label{lembound-00-i}
\rem_p(F,F_0)\le {1\over 1-p}\big (F_0(F^{-1}(p))-p\big ) \big( F^{-1}(p)-F_0^{-1}(p) \big) .
\end{equation}
The main term on the right-hand side of equation~\eqref{gap-i} is tractable, given the expert's subject-matter knowledge of the quantile $F^{-1}(p)$ and an estimate of the distance between the cdf's $F$ and $F_0$. In view of this knowledge, the right-hand side of bound~\eqref{lembound-00-i} is also tractable, provided that, additionally, we can assess the closeness of the $p^{\textrm{th}}$ quantiles (i.e., values-at-risk) $F^{-1}(p)$ and $F_0^{-1}(p)$. In summary, therefore, to assess the distance between $\ES_p(F)$ and $\ES_p(F_0)$, in addition to what is already known to the expert, we also need to assess the distance between the quantiles  $F^{-1}(p)$ and $F_0^{-1}(p)$. This is a considerably lesser problem than assessing the distance between the two quantile functions on their domains of definition $(0,1)$.

Consider now a basic though quite illuminating ``statistical'' example. Namely, by their very definition (recall equation~\eqref{emp-cdf}), empirical cdf's are sums of random variables, and thus integrals of empirical cdf's are also sums of random variables. This linearity plays a pivotal role when establishing desired statistical inference results for integrated quantiles and thus, in turn, for various risk measures of insurance and finance. Elaborating on this statistical aspect, in the previous sections we have shown the validity of the following results:
\begin{itemize}
\item
If $F\in \mathcal{F}_1^+$, then $\int_p^1 F_n^{-1}(u)\dd u $ is a consistent estimator of $\int_p^1 F^{-1}(u)\dd u $.
\item
If the cdf $F\in \mathcal{F}_2^+$ is continuous and strictly increasing at $x_p$, then $\int_p^1 F_n^{-1}(u)\dd u $ is asymptotically normal.
\end{itemize}
These results have been established under the iid~assumption on $X_1,\dots , X_n$, but this assumption can be relaxed, and thus the two results can be established in various non-iid~scenarios (e.g., under $\alpha $-mixing, etc.), as required by specific applications.

Of course, when working with profit-and-loss (P\&L) distributions, the left-hand version $\int_0^p F^{-1}(u)\dd u $ of integral~\eqref{int-0} is also of interest, and for it, we have the following analogs of the above statements:
\begin{itemize}
\item
If $F\in \mathcal{F}_1^-$, then $\int_0^p  F_n^{-1}(u)\dd u $ is a consistent estimator of $\int_0^p  F^{-1}(u)\dd u $.
\item
If $F\in \mathcal{F}_2^-$ and the cdf $F$  is continuous and strictly increasing at $x_p$, then $\int_0^p  F_n^{-1}(u)\dd u $ is asymptotically normal.
\end{itemize}
The sets $\mathcal{F}_1^-$ and $\mathcal{F}_2^-$ are defined like $\mathcal{F}_1^+$ and $\mathcal{F}_2^+$, respectively, but now using the negative part $X^- =\max\{-X,0\}$ instead of $X^+$.

We can of course equally successfully and almost effortlessly tackle more complicated integrals such as $\int_{\Delta} F_n^{-1}(u)\dd u $, as long as $\Delta $ is the union of some disjoint subintervals of $(0,1)$. Indeed, establishing consistency and asymptotic normality for such integrals reduces to establishing the corresponding properties of linear combinations of integrals of the types that we have extensively  discussed in the current paper, and the results such as those discussed by \citet[][Section~3.3]{S1980} make the task almost effortless.

\appendix
\section{Proofs}
\label{proofs}

Before proving Theorems~\ref{th:new} and \ref{theorem-1}, we shall first establish two auxiliary lemmas: the first one will confirm the validity of the first equation of~\eqref{eq-0}, whereas the second lemma will be used in the proof of Theorem~\ref{th:new}. Note that in the two lemmas, as well as when proving the two theorems and establishing other technical results, we shall avoid the classical formulas of integration-by-parts and change-of-variables. Instead, we shall rely on Fubini's theorem, which is perfectly suited for our purpose, especially in view of the fact that, in general, the cdf's and their quantile functions are not, strictly speaking, the ordinary inverses of each other.

\begin{lemma}\label{lemma-1}
If $F\in \mathcal{F}_1$, that is, if the first moment of $X$ is finite, then
\begin{equation}\label{lemma-eq-1}
\int_0^1 \big( F^{-1}(u)-F_n^{-1}(u) \big) \dd u
=\int_{-\infty}^{\infty } \big( F_n(x)-F(x) \big) \dd x.
\end{equation}
\end{lemma}

\begin{proof}
We begin with the obvious equations
\begin{align}
\int_0^1 \big( F^{-1}(u)-F_n^{-1}(u) \big) \dd u
&=\int_0^1 F^{-1}(u) \dd u
-\int_0^1 F_n^{-1}(u) \dd u
\notag
\\
&= \mathbb{E}(X)-\bar{X},
\label{lemma-eq-1a}
\end{align}
where $\bar{X}$ denotes the sample mean of $X_1,\dots , X_n$.  To show that the right-hand sides of equations~\eqref{lemma-eq-1} and~\eqref{lemma-eq-1a} are equal, we shall employ Fubini's theorem. To avoid notational confusion, we shall use $x_1,\dots , x_n$ instead of $X_1,\dots , X_n$, that is, we shall prove the equation
\begin{equation}\label{lemma-eq-1b}
{1\over n}\sum_{i=1}^n \int_{-\infty}^{\infty } \big( \mathds{1}\{x_i\le x\}-F(x) \big) \dd x
=\mathbb{E}(X)-\bar{x}
\end{equation}
We now write a string of equations:
\begin{align}
\int_{-\infty}^{\infty } &\big( \mathds{1}\{x_i\le x\}-F(x) \big) \dd x
\notag
\\
&=
\int_{-\infty}^{\infty } \mathds{1}\{x_i\le x\}\big( 1-F(x) \big) \dd x
+\int_{-\infty}^{\infty } \mathds{1}\{x_i> x\}\big( -F(x) \big) \dd x
\notag
\\
&= \mathbb{E}\bigg(\int_{-\infty}^{\infty } \mathds{1}\{x_i\le x\}\mathds{1}\{X> x\}  \dd x\bigg)
-\mathbb{E}\bigg(\int_{-\infty}^{\infty } \mathds{1}\{x_i> x\}\mathds{1}\{X\le x\} \dd x \bigg)
\notag
\\
&= \mathbb{E}\big( (X-x_i)_{+}\big)-\mathbb{E}\big( (x_i-X)_{+}\big)
\notag
\\
&= \mathbb{E}\big( (X-x_i)_{+}\big)-\mathbb{E}\big( (X-x_i)_{-}\big)
\notag
\\
&= \mathbb{E}(X)-x_i,
\label{lemma-eq-1c}
\end{align}
where $a_{+}$ and $a_{-}$ denote, respectively, the positive and negative parts of any $a\in \mathbb{R}$ and satisfy the equation $a_{+}-a_{-}=a$. Equation~\eqref{lemma-eq-1c} obviously leads to equation~\eqref{lemma-eq-1b}, thus concluding the proof of Lemma~\ref{lemma-1}.
\end{proof}

\begin{lemma}\label{lemma-2}
If $F\in \mathcal{F}_1^+$, that is, if $\mathbb{E}(X^+)<\infty $, then the equation
\begin{equation}\label{lemma-eq-2}
\int_z^\infty (1-F(x)) \dd x =  \int_\R (x-z)_+ \dd F(x)
\end{equation}
holds for every $z\in \mathbb{R}$.
\end{lemma}

\begin{proof}
Just like in the proof of the previous lemma, we rely on Fubini's theorem and have the equations
\begin{align*}
\int_z^\infty (1-F(x)) \dd x
&= \mathbb{E}\bigg(\int_{-\infty}^{\infty } \mathds{1}\{x>z\}\mathds{1}\{X> x\}  \dd x\bigg)
\notag
\\
&= \mathbb{E}\big( (X-z)_{+}\big)
\notag
\\
&=  \int_\R (x-z)_+ \dd F(x).
\end{align*}
This concludes the proof of Lemma~\ref{lemma-2}.
\end{proof}

\begin{proof}[\bf Proof of Theorem~\ref{th:new}]
For any $H\in \mathcal{F}_1^+$, and thus for $F,G\in \mathcal{F}_1^+$ in particular, let $m_H(z)$ be the function defined by
\begin{align*}
m_H(z)
:=& (1-p)z + \int_z^\infty (1-H(x)) \dd x
\\
=&  (1-p)z + \int_\R (x-z)_+ \dd H(x) ,
\end{align*}
where we used Lemma~\ref{lemma-2}.
Hence,
\begin{align}   \notag
-\int_z^\infty (G(x)-F(x)) \dd x
&=   \int_z^\infty (1-G(x)) \dd x    - \int_z^\infty (1-F(x)) \dd x
\\&= \int_\R (x-z)_+ \dd G(x)  -  \int_\R (x-z)_+ \dd F(x) \notag
\\&= m_G(z)-m_F(z) . \label{eq:new-0}
 \end{align}
Using the ES formula  of \citet[][Theorem~10]{RU02}, we have
\begin{align}
\int_p^1  F^{-1}(u) \dd u
&= \min_{y\in \R} \left\{(1-p)y
+ \int_\R (x-y)_+ \dd F(x)\right\}
\notag
\\
&=\min_{y\in \R} m_F(y) .
\label{eq:new-1}
 \end{align}
Similarly, equation~\eqref{eq:new-1} holds with $F$ replaced by $G$. Putting equations~\eqref{eq:new-0} and \eqref{eq:new-1} together, we arrive at
 \begin{align*}
 \Delta_{p,z} (F,G)  & =\min_{y\in \R}m_F(y)  -  \min_{y\in \R} m_G(y)
 + m_G(z)-m_F(z)
 \\& \le m_G(z) -  \min_{y\in \R} m_G(y).
\end{align*}
Note that the function $m_G(z)$ is convex and its right-hand derivative is $D_{+}m_G(z)=G(z)-p$. Since $G^{-1}(p) \in \arg\min_{y\in \R} m_G(y)$, we therefore have \citep[e.g.,][p.~61]{W1991}
\begin{align*}
m_G(z) -  \min_{y\in \R} m_G(y)
&\le D_{+}m_G(z) ( z-G^{-1}(p))
\\
&= (G(z) -p)( z-G^{-1}(p)).
\end{align*}
This establishes the right-hand bound of~\eqref{eq:new}.
For the left-hand bound, we apply the just established result on $\Delta_{p,z} (G,F)$ and have
\begin{align*}
 - \Delta_{p,z} (F,G)
 &= \Delta_{p,z} (G,F)
 \\
 &\le (F(z) -p)( z-F^{-1}(p)).
\end{align*}
This establishes the left-hand bound of~\eqref{eq:new} and completes the proof of Theorem~\ref{th:new}.
\end{proof}

\begin{proof}[\bf Proof of Theorem~\ref{theorem-1}.]
We start by proving  statement~\eqref{extend-1b}, that is, we first show that  $\Gamma_p(F,G)$ coincides with $\Gamma_p^*(F,G)$ whenever $ F,G\in \mathcal{F}_1^+$. Note that $\Gamma_p(F,G)$ is finite due to $ F,G\in \mathcal{F}_1^+$, and so we write the equation
\[
\int_{F^{-1}(p) }^{\infty } \big( F(x)-G(x) \big) \dd x
= \int_{G^{-1}(p) }^{\infty } \big( 1-G(x) \big) \dd x
-\int_{F^{-1}(p) }^{\infty } \big( 1-F(x) \big) \dd x
+ \int_{F^{-1}(p) }^{G^{-1}(p)} \big( 1-G(x) \big) \dd x .
\]
Fubini's theorem implies
\begin{align}
\int_{F^{-1}(p) }^{\infty } \big( 1-F(x) \big) \dd x
&=\mathbb{E}\bigg( \int_{F^{-1}(p) }^{\infty } \mathds{1}\{X> x\} \dd x \bigg)
\notag
\\
&= \mathbb{E}\Big( \big( X-F^{-1}(p))_{+} \Big )
\notag
\\
&= \int_0^1 \big( F^{-1}(u)-F^{-1}(p) \big)_{+} \dd u
\notag
\\
&= \int_p^1 F^{-1}(u)\dd u -(1-p)F^{-1}(p).
\label{eq-2}
\end{align}
Of course, the same equations hold when $F$ is replaced by $G$. Combining the equations, we obtain
\begin{align}
\Gamma_p(F,G)
&= (1-p)\big( F^{-1}(p)-G^{-1}(p) \big)
+ \int_{F^{-1}(p) }^{G^{-1}(p)} \big( 1-G(x) \big) \dd x
\notag
\\
&= \int_{F^{-1}(p) }^{G^{-1}(p)} \big( p-G(x) \big) \dd x,
\label{b-1}
\end{align}
which establishes statement~\eqref{extend-1b}.

We next prove statement~\eqref{extend-1c}, and thus work with arbitrary cdf's $F,G\in \mathcal{F}$. To show that $\Gamma_p^*(F,G)$ is non-negative, we start with the case $F^{-1}(p) < G^{-1}(p)$. Consequently, the integration variable $x$ in the definition of $\Gamma_p^*(F,G)$ satisfies the inequality $x<G^{-1}(p)$, which is equivalent to $G(x)<p$. This implies $\int_{F^{-1}(p) }^{G^{-1}(p)} \big( p-G(x) \big) \dd x \ge 0$.

When $F^{-1}(p) \ge  G^{-1}(p)$, on the other hand, $\Gamma_p^*(F,G)$ is equal to $\int_{G^{-1}(p) }^{F^{-1}(p)} \big( G(x)-p \big) \dd x $, and since the integration variable $x$ is such that $x\ge G^{-1}(p)$, we have $G(x)\ge p$ and thus $\int_{G^{-1}(p) }^{F^{-1}(p)} \big( G(x)-p \big) \dd x \ge 0$. This concludes the proof that   $\Gamma_p^*(F,G)\ge 0$.

It remains to establish the right-most bounds of statements~\eqref{extend-1cc} and~\eqref{extend-1c}. We  start with the case $F^{-1}(p) < G^{-1}(p)$ and have
\begin{align}
\int_{F^{-1}(p) }^{G^{-1}(p)} \big( p-G(x) \big) \dd x
&\le \int_{F^{-1}(p) }^{G^{-1}(p)} \big( p-G(F^{-1}(p)) \big) \dd x
\notag
\\
&= \big( G^{-1}(p)-F^{-1}(p) \big) \big( p-G(F^{-1}(p)) \big),
\label{b-2}
\end{align}
where the inequality holds because $x\ge F^{-1}(p)$ and thus $G(x)\ge G(F^{-1}(p))$. This establishes statement~\eqref{extend-1cc}. Statement~\eqref{extend-1c} immediately follows from bound~\eqref{b-2} because $F(F^{-1}(p))\ge p$ always holds.

When $F^{-1}(p) \ge G^{-1}(p)$, we have
\begin{align}
\int_{G^{-1}(p) }^{F^{-1}(p)} \big( G(x)-p \big) \dd x
&\le \int_{G^{-1}(p) }^{F^{-1}(p)} \big( G(F^{-1}(p))-p \big) \dd x
\notag
\\
&= \big( F^{-1}(p)-G^{-1}(p) \big) \big( G(F^{-1}(p))-p \big),
\label{b-3}
\end{align}
where the inequality holds because $x\le F^{-1}(p)$ and so $G(x)\le G(F^{-1}(p))$.
This establishes statement~\eqref{extend-1cc}. Statement~\eqref{extend-1c} follows from bound~\eqref{b-3} because the continuity of $F$ at $x_p=F^{-1}(p)$ implies $F(F^{-1}(p))= p$. This finishes the entire proof of  Theorem~\ref{theorem-1}.
\end{proof}

\begin{lemma}\label{lemma-3}
Let $p\in (0,1)$ and $F\in \mathcal{F}_2^+$. Then the variance $\sigma_{F,p}^2$ is finite and can be expressed by formula~\eqref{int-01}.
\end{lemma}

\begin{proof}
To show that the variance is finite, we need to check that
\begin{equation}\label{quant-0}
\mathbb{E}\left( \int_{F^{-1}(p) }^{\infty } \int_{F^{-1}(p) }^{\infty }
\big| \mathds{1}\{X\le x\}-F(x) \big|
\big| \mathds{1}\{X\le y\}-F(y) \big|
\dd x \dd y\right)<\infty .
\end{equation}
This is the same as showing that
\begin{equation}\label{quant-1}
\mathbb{E}\left( \int_{F^{-1}(p) }^{\infty } \int_{F^{-1}(p) }^{\infty }
\big| \mathds{1}\{X> x\}-S(x)\big|
\big| \mathds{1}\{X> y\}-S(y) \big|
\dd x \dd y\right)<\infty ,
\end{equation}
where $S=1-F$ is the survival function.
Since $p\in (0,1)$ and thus $F^{-1}(p)\in \mathbb{R}$, the assumption $F\in \mathcal{F}_2^+$ (which implies $F\in \mathcal{F}_1^+$) together with equation~\eqref{eq-2} imply that the integral $\int_{F^{-1}(p) }^{\infty } \big( 1-F(x) \big) \dd x $ is finite. This reduces checking statement~\eqref{quant-1} to proving
\begin{equation}\label{quant-2}
\mathbb{E}\left( \int_{F^{-1}(p) }^{\infty } \int_{F^{-1}(p) }^{\infty }
\mathds{1}\{X> x\} \mathds{1}\{X> y\}
\dd x \dd y\right)<\infty .
\end{equation}
The double integral is separable, and each of the two integrals is equal to $(X-F^{-1}(p))_{+}$. Consequently, statement~\eqref{quant-2} holds whenever the expectation $\mathbb{E}\left( (X-F^{-1}(p))_{+}^2\right)$ is finite, and the latter holds because $F\in \mathcal{F}_2^+$. Consequently, $\sigma_{F,p}^2<\infty $.

To prove equation~\eqref{int-01}, we start with statement~\eqref{quant-1}, which we have already established. Fubini's theorem can now be applied, thus yielding the equations
\begin{align*}
\mathbb{E}&\left( \int_{F^{-1}(p) }^{\infty } \int_{F^{-1}(p) }^{\infty }
\big( \mathds{1}\{X\le x\}-F(x) \big)
\big( \mathds{1}\{X\le y\}-F(y) \big)
\dd x \dd y\right)
\\
&=\int_{F^{-1}(p) }^{\infty } \int_{F^{-1}(p) }^{\infty }
\mathbb{E}\Big(\big( \mathds{1}\{X\le x\}-F(x) \big)
\big( \mathds{1}\{X\le y\}-F(y) \big)\Big)
\dd x \dd y
\\
&= \int_{F^{-1}(p) }^{\infty } \int_{F^{-1}(p) }^{\infty }
\mathrm{Cov}\big( \mathds{1}\{X\le x\},\mathds{1}\{X\le y\} \big)
\dd x \dd y
\\
&= \int_{F^{-1}(p) }^{\infty } \int_{F^{-1}(p) }^{\infty }
\big( F(x \wedge y)-F(x)F(y) \big) \dd x \dd y.
\end{align*}
This completes the proof of equation~\eqref{int-01}.
\end{proof}

\end{document}